\documentclass[
showkeys,	
reprint,	
amsmath,	
amssymb,	
amsfonts,	
aps,		
prl%
groupedaddress,
superscriptaddress,
a4paper,%
nofootinbib,
eprint,		
final,		
preprintnumbers, 
]{revtex4-2}

\usepackage[tbtags]{mathtools}
\usepackage{amsmath,amsfonts,mathdots,amssymb,yfonts,calc}
\usepackage{graphicx,graphics,xcolor}
\usepackage{subfigure}
\usepackage{amsthm}

\newtheorem{theorem}{Theorem}

\usepackage{natbib}
\usepackage{booktabs}
\usepackage{siunitx}
\usepackage{float}
\usepackage{tabularray}
\usepackage{multirow}
\usepackage{physics}
\usepackage{dcolumn}
\usepackage[utf8]{inputenc}  
\usepackage[T1]{fontenc} 
\usepackage{natbib}
\usepackage{booktabs}
\usepackage{hyperref}
\hypersetup{
    colorlinks=true,
    linkcolor=blue,
    filecolor=magenta,      
    urlcolor=blue!80,
    citecolor=cyan,
    linktocpage=true,
    breaklinks=false,
}

\begin{document}
\title{Geometry Induced Adaptive Dissipation in Non Linear Contact Hamiltonian Systems: Theory and Application to Duffing Oscillator}
\author{Vinesh Vijayan}
\email[]{vinesh.physics@rathinam.in}
\affiliation{Department of Science and Humanities, Rathinam Technical Campus, Coimbatore, India -641021}

\author{R Sathishkumar}
\affiliation{Department of Science and Humanities, Rathinam Technical Campus, Coimbatore, India -641021}
\author{D Kaleeswaran}
\affiliation{Department of computer science and engineering, Rathinam Technical Campus, Coimbatore, India -641021}

\date{\today}

\begin{abstract}
We develop a generalized contact Hamiltonian framework by extending canonical contact Hamiltonian mechanics to nonlinear dissipative systems through the replacement of the classical linear contact potential with a smooth nonlinear contact potential. The proposed formulation establishes a generalized energy dissipation law together with a structural characterization of admissible contact-induced damping, introducing effective contact dissipation as an intrinsic geometric measure of adaptive dissipation. As an application, a generalized contact Duffing oscillator is derived, in which dissipation emerges intrinsically from the contact geometry rather than being introduced phenomenologically. Numerical investigations of the generalized contact Duffing oscillator demonstrate that nonlinear contact geometry governs the effective dissipation and produces significant changes in the phase-space structure, energy decay, dynamical stability, and long-term nonlinear behavior. The proposed theory therefore provides a systematic geometric framework for constructing and analyzing nonlinear dissipative Hamiltonian systems within contact Hamiltonian mechanics.
\end{abstract}

\keywords{}
\maketitle
\pagestyle{plain}

\section{\label{s1}Introduction}
In nonlinear dynamical systems arising in physics, mechanics, engineering, and thermodynamics, dissipation is ubiquitous. Friction, damping, irreversible energy transfer, and interactions with the surroundings strongly influence the qualitative behavior of real systems by governing their stability, bifurcation, and asymptotic dynamics\cite{Morrison1986}. However, the symplectic formulation of classical Hamiltonian mechanics is inherently restricted to reversible dynamics and cannot naturally accommodate dissipative processes\cite{Bravetti2017AnnPhys}. This limitation has motivated the development of geometric formulations capable of incorporating dissipation and has attracted considerable attention over the past few decades\cite{Bravetti2017Entropy,deLeonLainz2019}.

One of the most promising extensions of classical Hamiltonian mechanics is contact Hamiltonian mechanics\cite{Eberard2007,Bravetti2017AnnPhys}. By introducing an additional coordinate, the \emph{contact coordinate}, the symplectic phase space is extended to a $(2n+1)$-dimensional contact manifold, naturally incorporating irreversible processes while preserving the Hamiltonian structure of the dynamics\cite{Eberard2007,Bravetti2017AnnPhys,deLeonLainz2019}. This framework provides a unified description of both conservative and dissipative systems, with dissipation arising intrinsically from the contact geometry rather than being introduced phenomenologically\cite{Bravetti2017AnnPhys,Bravetti2015}. Building on the foundational works of Bravetti, Cruz, and Tapias, modern contact Hamiltonian mechanics extends the fundamental concepts of symplectic Hamiltonian dynamics, including Hamiltonian vector fields, canonical equations, and invariant measures, to the contact setting\cite{Bravetti2017AnnPhys,deLeonLainz2019}.

Contact Hamiltonian mechanics has witnessed extensive development in recent years and has found applications in a wide range of areas, including irreversible thermodynamics, statistical mechanics\cite{Bravetti2015}, Hamilton--Jacobi theory\cite{deLeonLainz2019}, variational principles\cite{Wang2017}, contact dynamical systems\cite{Bravetti2017AnnPhys}, field theories\cite{Gaset2020}, and geometric numerical methods\cite{Vermeeren2019}. This mature mathematical framework provides a natural foundation for the analysis of open and dissipative systems. Its versatility has further enabled unified Lagrangian--Hamiltonian formulations, contact field theories, and the treatment of systems with nonholonomic constraints\cite{deLeon2020,deLeon2021}.

Despite these advances, an important limitation remains in existing contact Hamiltonian formulations. Nearly all current developments employ a linear contact potential of the form $\Gamma(s)=\lambda s$, which produces constant dissipation\cite{Bravetti2017AnnPhys, Wang2017, Gaset2020, deLeon2023HJ, PerezAlvarez2024}. Although this formulation successfully extends Hamiltonian mechanics to dissipative systems, it represents only a restricted class of dissipative dynamics. In many physical systems, dissipation evolves during the motion and depends on the system state, giving rise to adaptive damping, bifurcations, and nonlinear energy transfer. The influence of nonlinear contact geometry on such adaptive dissipation and nonlinear oscillatory dynamics remains largely unexplored\cite{Ravindra1994, Baltanas2001, Zaitsev2012, AlHababi2020, Perkins2024}. Consequently, a generalized contact Hamiltonian framework capable of generating nonlinear dissipative dynamics directly from the contact geometry has yet to be developed systematically.

In this work, we develop a generalized contact Hamiltonian framework by replacing the classical linear contact potential with a smooth nonlinear contact potential. The proposed formulation preserves the canonical contact Hamiltonian structure while allowing the dissipation mechanism to emerge naturally from the underlying contact geometry. Within this framework, we derive a generalized energy dissipation law, establish a structural characterization of contact-induced damping, and introduce the concept of effective contact dissipation as a geometric measure of the state-dependent dissipation generated by the nonlinear contact potential.

As an application, we formulate a generalized contact Duffing oscillator, using the Duffing system as a prototype for nonlinear dynamics. Unlike conventional approaches, in which dissipation is introduced phenomenologically, the proposed model generates dissipation intrinsically through the underlying contact geometry. Numerical investigations demonstrate how the nonlinear contact potential influences the effective dissipation, phase-space dynamics, energy decay, bifurcation structure, and dynamical stability. The proposed framework extends our previous work and provides a geometric approach for investigating nonlinear dissipative oscillations within contact Hamiltonian mechanics\cite{Vijayan2026}.

\section{\label{s2}Generalized Contact Hamiltonian Framework}
\subsection{Contact Hamiltonian Mechanics}

Let $Q$ be an $n$-dimensional configuration manifold with local coordinates $q=(q^1,\ldots,q^n)$, and let $\mathcal{M}=T^{*}Q\times\mathbb{R}$ be the $(2n+1)$-dimensional contact manifold equipped with the canonical contact one-form
\begin{equation}
\eta=ds-p_i\,dq^i.
\label{E1}
\end{equation}

The Reeb vector field $R$ is uniquely defined by
\begin{equation}
i_R\eta=1,\qquad i_Rd\eta=0,
\label{E2}
\end{equation}
which gives
\[
R=\frac{\partial}{\partial s}.
\]

For a smooth contact Hamiltonian $H:\mathcal{M}\rightarrow\mathbb{R}$, the contact Hamiltonian vector field $X_H$ satisfies
\begin{equation}
i_{X_H}d\eta=dH-(R(H))\eta,
\label{E3}
\end{equation}
together with
\begin{equation}
\eta(X_H)=-H.
\label{E4}
\end{equation}

Equations~(\ref{E3})--(\ref{E4}) yield the canonical contact Hamiltonian equations
\begin{align}
\dot q^i &= \frac{\partial H}{\partial p_i}, \label{E5}\\
\dot p_i &= -\frac{\partial H}{\partial q^i}-p_i\frac{\partial H}{\partial s}, \label{E6}\\
\dot s &= p_i\frac{\partial H}{\partial p_i}-H. \label{E7}
\end{align}

Unlike symplectic Hamiltonian flows, contact Hamiltonian flows are non-conservative. The additional contact coordinate $s$ provides a geometric mechanism for irreversible energy exchange while preserving the Hamiltonian framework.

For linear dissipation, the contact Hamiltonian is
\begin{equation}
H(q,p,s)=H_0(q,p)+\lambda s,
\label{E8}
\end{equation}
where $H_0$ is the conservative Hamiltonian and $\lambda>0$ is the dissipation coefficient. Substituting (\ref{E8}) into (\ref{E6}) gives
\begin{equation}
\dot p_i=-\frac{\partial H_0}{\partial q^i}-\lambda p_i,
\label{E9}
\end{equation}
which recovers viscous damping. However, this formulation is restricted to constant dissipation and cannot represent nonlinear or state-dependent damping.

\subsection{Generalized Nonlinear Contact Hamiltonian}

To generalize the dissipation mechanism, consider
\begin{equation}
H(q,p,s)=H_0(q,p)+\Gamma(s),
\label{E10}
\end{equation}
where $H_0:T^{*}Q\rightarrow\mathbb{R}$ is the conservative Hamiltonian and $\Gamma\in C^2(\mathbb{R})$ is an arbitrary nonlinear contact potential. The classical model is recovered by choosing $\Gamma(s)=\lambda s$.

The derivative with respect to the contact coordinate is
\begin{equation}
\frac{\partial H}{\partial s}=\Gamma'(s),
\label{E11}
\end{equation}
which motivates the effective dissipation function
\begin{equation}
D(s):=\Gamma'(s).
\label{E12}
\end{equation}

The generalized contact Hamiltonian equations become
\begin{align}
\dot q^i &= \frac{\partial H_0}{\partial p_i}, \label{E13}\\
\dot p_i &= -\frac{\partial H_0}{\partial q^i}-D(s)p_i, \label{E14}\\
\dot s &= \Delta(H_0)-H_0-\Gamma(s), \label{E15}
\end{align}
where the Euler homogeneous operator is
\begin{equation}
\Delta(H_0)=p_i\frac{\partial H_0}{\partial p_i}.
\label{E16}
\end{equation}

Equations~(\ref{E13})--(\ref{E15}) preserve the canonical contact structure while extending the admissible dissipation law from constant viscous damping to arbitrary nonlinear contact-induced dissipation. The entire dissipative mechanism is encoded by the contact potential through $D(s)$; different choices of $\Gamma(s)$ generate different nonlinear dissipation laws without modifying the underlying contact geometry.

For mechanical systems,
\begin{equation}
H_0(q,p)=T(p)+V(q),
\label{E17}
\end{equation}
where
\[
T(p)=\frac12 p^{T}M^{-1}p.
\]
The governing equations reduce to
\begin{align}
\dot q &= M^{-1}p, \label{E18}\\
\dot p &= -\nabla V(q)-D(s)p, \label{E19}\\
\dot s &= p^{T}M^{-1}p-H_0-\Gamma(s). \label{E20}
\end{align}

This formulation provides a unified framework for nonlinear contact-induced dissipation, including state-dependent damping, while preserving the canonical contact geometry.

\paragraph{Energy Dissipation Law}

Let $(q(t),p(t),s(t))$ be a continuously differentiable solution of the generalized contact Hamiltonian system. Then the conservative Hamiltonian satisfies
\begin{equation}
\frac{dH_0}{dt}=-D(s)\Delta(H_0),
\label{E21}
\end{equation}
where $D(s)=\Gamma'(s)$.

\textbf{Proof.}
Using (\ref{E13}) and (\ref{E14}),
\begin{equation}
\begin{aligned}
\frac{dH_0}{dt}
&=\frac{\partial H_0}{\partial q^i}\dot q^i
+\frac{\partial H_0}{\partial p_i}\dot p_i \\
&=-D(s)\,p_i\frac{\partial H_0}{\partial p_i}
=-D(s)\Delta(H_0).
\end{aligned}
\label{E22}
\end{equation}

For mechanical systems,
\begin{equation}
\Delta(H_0)=2T,
\label{E23}
\end{equation}
and hence
\begin{equation}
\frac{dH_0}{dt}=-2D(s)T.
\label{E24}
\end{equation}

Equation~(\ref{E24}) shows that the mechanical energy dissipation rate is governed entirely by the effective contact dissipation function $D(s)$. Consequently, different choices of $\Gamma(s)$ produce different energy decay mechanisms while preserving the underlying contact Hamiltonian structure.

Numerical simulations of the generalized contact Duffing oscillator, Eqs.~(\ref{E39})--(\ref{E41}), are performed to verify the generalized energy dissipation law. Figure~\ref{F2} compares the analytical prediction given by Theorem~2.1 with the numerically computed time derivative of the conservative mechanical energy. The excellent agreement between the analytical and numerical results validates the proposed energy dissipation law and confirms the consistency of the generalized contact Hamiltonian formulation.

\begin{figure}[t]
\centering
\includegraphics[width=\linewidth]{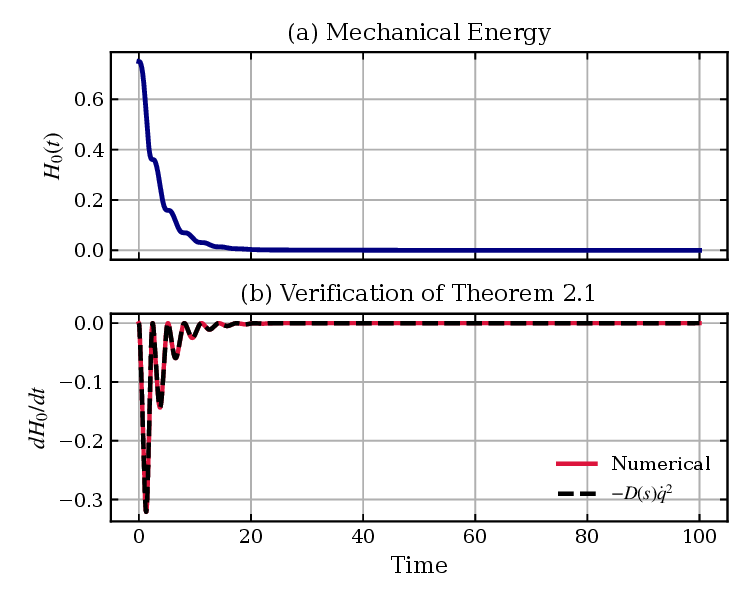}
\caption{Numerical verification of the generalized energy dissipation law (Theorem~2.1). (a) Temporal evolution of the mechanical energy $H_0(t)$ for the generalized contact Duffing oscillator, showing monotonic energy decay due to intrinsic geometric dissipation. (b) Comparison between the numerical energy rate $\mathrm{d}H_0/\mathrm{d}t$ and the theoretical prediction $-D(s)\dot{q}^{\,2}$. The excellent agreement validates the generalized energy dissipation law derived from the proposed nonlinear contact Hamiltonian framework.}
\label{F2}
\end{figure}

Figure~\ref{F2}(a) shows the monotonic decay of the conservative mechanical energy due to the intrinsic geometric dissipation generated by the contact Hamiltonian. Figure~\ref{F2}(b) compares the theoretical prediction with the numerically computed energy dissipation rate. The excellent agreement between the two confirms the validity of Theorem~2.1 and verifies that the proposed generalized contact Hamiltonian accurately captures the energy transfer induced by the nonlinear contact geometry.

\section{\label{s3}Nonlinear Contact Dissipation}

This section examines the mathematical consequences of the generalized contact Hamiltonian by characterizing the evolution of the effective contact dissipation and demonstrating that a broad class of nonlinear damping laws cannot be generated within the canonical linear contact framework.

\subsection{Effective Contact Dissipation}

In the generalized contact formulation, the dissipative force is determined by the local geometry of the contact potential. The momentum equation,
\begin{equation}
\dot{p}_i=-\frac{\partial H_0}{\partial q^i}-D(s)p_i,
\label{E25}
\end{equation}
with
\begin{equation}
D(s)=\Gamma'(s),
\label{E26}
\end{equation}
identifies $D(s)$ as the \emph{effective contact dissipation}. Unlike conventional damping coefficients prescribed externally, $D(s)$ is generated intrinsically by the contact geometry through the contact potential. It represents the instantaneous rate at which the contact structure extracts mechanical energy from the conservative subsystem.

For the quadratic contact potential,
\begin{equation}
\Gamma(s)=\lambda s+\alpha s^2,
\label{E27}
\end{equation}
the corresponding effective dissipation is
\begin{equation}
D(s)=\lambda+2\alpha s.
\label{E28}
\end{equation}

Similarly, for the saturating contact potential,
\begin{equation}
\Gamma(s)=\lambda s+\alpha\tanh(s),
\label{E29}
\end{equation}
the effective dissipation becomes
\begin{equation}
D(s)=\lambda+\alpha\,\operatorname{sech}^2(s).
\label{E30}
\end{equation}

Equations~(\ref{E27})--(\ref{E30}) illustrate that different choices of the contact potential generate distinct nonlinear dissipation laws while preserving the canonical contact structure. Consequently, the generalized formulation admits a broad class of nonlinear damping mechanisms in which the dissipation is an intrinsic geometric quantity determined entirely by the local properties of the contact potential.
\subsection{ Evolution of the Contact Dissipation}
Differentiating (\ref{E26}) with respect to time gives
\begin{equation}
\frac{dD}{dt}=\Gamma''(s)\frac{ds}{dt},
\label{E31}
\end{equation}
and substituting the contact evolution equation (\ref{E15}) yields
\begin{equation}
\dot{D}
=\Gamma''(s)\left[\Delta(H_0)-H_0-\Gamma(s)\right].
\label{E32}
\end{equation}

Equation~(\ref{E32}) governs the evolution of the effective contact dissipation. The dynamics of $D(s)$ are determined solely by the curvature of the contact potential through $\Gamma''(s)$. Hence, different contact potentials generate distinct dissipation dynamics. In particular, $\Gamma''(s)=0$ implies $\dot{D}=0$, recovering the constant dissipation of the classical linear contact Hamiltonian. Therefore, the dissipation coefficient is itself a dynamical variable: the contact coordinate not only evolves with the system but also regulates the dissipation through the geometry of the contact potential.

\begin{theorem}[Structural theorem]
\label{thm:structural}
Let
\[
H(q,p,s)=H_0(q,p)+\Gamma(s),
\]
where $H_0\in C^2(T^*Q)$ and $\Gamma\in C^2(\mathbb{R})$. Then every dissipative force generated by the generalized contact Hamiltonian is necessarily of the form
\begin{equation}
F_d=-D(s)p,
\label{E33}
\end{equation}
where
\[
D(s)=\Gamma'(s).
\]
Consequently, the damping coefficient depends exclusively on the contact coordinate $s$ and cannot depend explicitly on the mechanical variables $q$ or $p$.
\end{theorem}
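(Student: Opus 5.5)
The plan is to read off the dissipative force directly from the generalized contact Hamiltonian equations (\ref{E13})--(\ref{E15}), after first fixing what "dissipative force" means. I will define it as the part of $\dot p_i$ that is not conservative, i.e.\ the part that does not come from $-\partial H_0/\partial q^i$. By (\ref{E6}), $\dot p_i$ splits into $-\partial H/\partial q^i$ and $-p_i\,\partial H/\partial s$, and only the second term vanishes when $H$ does not depend on $s$. So the natural definition is $F_{d,i}:=-p_i\,\partial H/\partial s$, and the symplectic limit $\partial_s H=0$ then has no dissipation. This term comes only from the Reeb direction, through $R(H)$ in (\ref{E3}).

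Next I substitute the additive splitting $H=H_0(q,p)+\Gamma(s)$. Since $H_0$ does not depend on $s$, we have $\partial H/\partial q^i=\partial H_0/\partial q^i$, so the conservative part is exactly the force of $H_0$. By (\ref{E11}), $\partial H/\partial s=\Gamma'(s)=D(s)$. Hence $F_{d,i}=-D(s)p_i$, which is (\ref{E33}). Here I use that a $C^1$ solution exists; $H_0\in C^2$ and $\Gamma\in C^2$ give locally Lipschitz vector fields, so the flow is well defined. This also agrees with the energy law (\ref{E21}): the power $F_d\cdot\partial H_0/\partial p$ equals $-D(s)\Delta(H_0)$.

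For the "consequently" clause I argue by separation of variables. The coefficient multiplying $p_i$ is $\partial_s H=\Gamma'(s)$, and it is a function of $s$ alone because $\partial_q\partial_s H=\partial_p\partial_s H=0$ for any additively separated Hamiltonian. Therefore no choice of $\Gamma$ can produce a coefficient such as $c(q)$ or $c(|p|)$. Suppose some system in this class produced a force of the form $-c(q,p)p$ with $\partial_q c\neq0$ or $\partial_p c\neq0$ somewhere. Matching it against $-\Gamma'(s)p$ along all solutions would force $c(q,p)=\Gamma'(s)$ on an open set of initial data in $\mathcal M$, where $(q,p,s)$ are independent. That is a contradiction.

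The main obstacle is the definitional step. The conservative/dissipative split of $\dot p$ is not coordinate-invariant in general, so the content of the theorem depends on calling the $-p\,\partial_s H$ term the dissipative force. I would justify this geometrically: it is the unique term in (\ref{E6}) coming from $(R(H))\eta$ in (\ref{E3}). I would also state clearly that "cannot depend explicitly on $q$ or $p$" means explicit dependence only. Implicit dependence through $s(t)$, which follows (\ref{E15}), is still allowed.
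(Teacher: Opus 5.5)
Your proposal is correct and follows essentially the same route as the paper: you substitute $H=H_0+\Gamma(s)$ into the momentum equation, read off $F_d=-\Gamma'(s)p$, note that $\Gamma'(s)$ has no $q$ or $p$ dependence, and argue the converse by contradiction. Your explicit definition of the dissipative force as the Reeb-generated term $-p_i\,\partial H/\partial s$, and your use of the independence of $(q,p,s)$ on $\mathcal{M}$ in the converse, make precise what the paper's proof leaves implicit.
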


\begin{proof}
From the generalized contact Hamiltonian (\ref{E10}),
\[
H(q,p,s)=H_0(q,p)+\Gamma(s),
\]
the dependence on the contact coordinate is entirely through the function $\Gamma(s)$. Hence,
\[
D(s)=\frac{\partial H}{\partial s}=\Gamma'(s),
\]
which is a function of $s$ alone. Therefore,
\[
\frac{\partial D}{\partial q^i}=0,
\qquad
\frac{\partial D}{\partial p_i}=0,
\]
for all $i$.

Substituting $D(s)$ into the momentum equation (\ref{E25}) gives
\[
\dot p_i=-\frac{\partial H_0}{\partial q^i}-D(s)p_i,
\]
from which the dissipative force is
\[
F_d=-D(s)p.
\]
Thus every dissipative force arising from the generalized contact Hamiltonian possesses the structure (\ref{E33}), with a damping coefficient depending solely on the contact coordinate.

Conversely, suppose a dissipative force is of the form
\[
F_d=-\widetilde D(q,p,s)\,p,
\]
where $\widetilde D$ depends explicitly on either $q$ or $p$. Then
\[
\frac{\partial \widetilde D}{\partial q^i}\neq0
\quad\text{or}\quad
\frac{\partial \widetilde D}{\partial p_i}\neq0
\]
for some $i$. Such a dependence cannot be represented as $\Gamma'(s)$ for any $\Gamma\in C^2(\mathbb{R})$, since $\Gamma'(s)$ is a function of the contact coordinate alone. This contradicts the defining structure of the generalized contact Hamiltonian (\ref{E10}). Therefore, no canonical contact Hamiltonian of the form (\ref{E10}) can generate damping coefficients with explicit dependence on the mechanical variables $q$ or $p$.
\end{proof}

A classical example is the Van der Pol oscillator,
\begin{equation}
\ddot{q}+\mu(q^2-1)\dot{q}+q=0,
\label{E34}
\end{equation}
whose nonlinear damping term depends explicitly on the configuration variable $q$. Consequently, it lies outside the class of dissipative systems characterized by Theorem~\ref{thm:structural}. The same conclusion applies to a broad class of nonlinear oscillators with state-, velocity-, or amplitude-dependent damping coefficients.

The proposed formulation is not intended to reproduce such damping laws directly. Instead, it establishes a geometric framework in which dissipation is generated intrinsically through the contact coordinate. Rather than prescribing the damping phenomenologically, nonlinear dissipation emerges naturally from the geometry of the contact Hamiltonian via the contact potential $\Gamma(s)$, thereby replacing an externally specified damping law with a geometrically generated, dynamically evolving dissipation mechanism. This geometric viewpoint considerably enlarges the class of admissible contact Hamiltonian models while preserving the canonical contact structure.
\section{\label{s4}Generalized Contact Duffing Oscillator}

This section derives a generalized contact Duffing oscillator in which dissipation is generated intrinsically through the contact potential. Consequently, the effective damping becomes a geometric quantity determined entirely by the contact Hamiltonian.

\subsection{Conservative and Generalized Duffing Oscillator}

Consider the conservative Duffing oscillator with generalized coordinate $q$ and momentum $p$. Its Hamiltonian is
\begin{equation}
H_0(q,p)=\frac{1}{2}p^2+\frac{1}{2}\omega_0^2q^2+\frac{1}{4}\beta q^4,
\label{E35}
\end{equation}
where $\omega_0$ is the natural frequency and $\beta$ is the nonlinear stiffness coefficient. The corresponding Hamiltonian equations are
\begin{align}
\dot q &= p,
\label{E36}\\
\dot p &= -\omega_0^2q-\beta q^3,
\label{E37}
\end{align}
which combine to give the classical Duffing oscillator.

Introducing the nonlinear contact potential,
\begin{equation}
H(q,p,s)=H_0(q,p)+\Gamma(s),
\label{E38}
\end{equation}
the generalized contact Hamiltonian equations become
\begin{align}
\dot q &= p,
\label{E39}\\
\dot p &= -\omega_0^2q-\beta q^3-D(s)p,
\label{E40}\\
\dot s &= \frac12p^2-\frac12\omega_0^2q^2-\frac14\beta q^4-\Gamma(s),
\label{E41}
\end{align}
where $D(s)=\Gamma'(s)$.

Eliminating the momentum variable yields the generalized contact Duffing equation,
\begin{equation}
\ddot q+D(s)\dot q+\omega_0^2q+\beta q^3=0,
\label{E42}
\end{equation}
with the contact evolution governed by (\ref{E41}).

For the linear contact potential $\Gamma(s)=\lambda s$, one has $D(s)=\lambda$, and (\ref{E42}) reduces to the classical contact Duffing oscillator. More generally, the contact potentials introduced in (\ref{E27})--(\ref{E30}) generate the corresponding nonlinear dissipation laws without modifying the conservative Hamiltonian. Consequently, the damping coefficient evolves together with the contact coordinate as an intrinsic dynamical variable rather than a prescribed parameter.

Furthermore, the conservative energy satisfies the dissipation law (\ref{E21}), implying that the nonlinear contact potential continuously regulates the rate of mechanical energy dissipation through the effective contact dissipation $D(s)$. Thus, the generalized contact Duffing oscillator contains the classical contact Duffing model as the linear limiting case while providing a unified geometric framework for nonlinear contact-induced dissipation.

\begin{figure}[t]
\centering
\includegraphics[width=\linewidth]{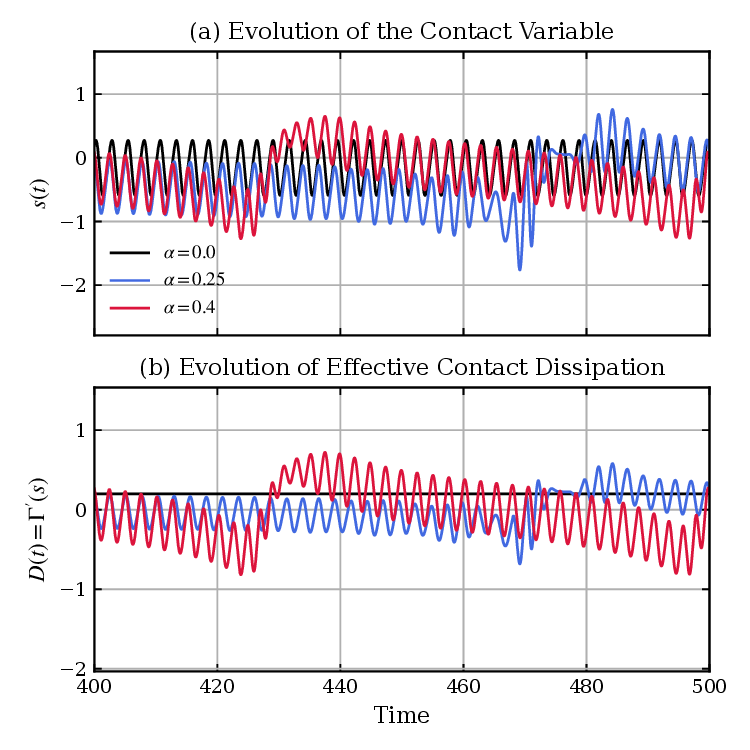}
\caption{Evolution of the contact variable $s(t)$ and the corresponding effective contact dissipation $D(s)=\Gamma'(s)$ for the generalized contact Duffing oscillator with the nonlinear contact potential $\Gamma(s)=\lambda s+\alpha s^2$. The nonlinear contact geometry generates a time-dependent effective dissipation that evolves naturally with the contact variable, demonstrating the adaptive geometric dissipation mechanism inherent in the proposed formulation. {\bf $\alpha = 0$: Constant dissipation (classical contact Hamiltonian)
$\alpha > 0$: Adaptive geometric dissipation (proposed framework)}}
\label{F1}
\end{figure}
Figure~\ref{F1} shows the time evolution of the contact variable and the corresponding effective contact dissipation for different values of the nonlinear parameter $\alpha$ using the quadratic contact potential. For $\alpha=0$, the contact potential is linear and the effective dissipation remains constant, even though the contact variable exhibits oscillatory behavior (black curve). As $\alpha$ increases, the contact variable undergoes larger oscillations, resulting in increasingly pronounced variations in the effective contact dissipation. Consequently, the dissipation evolves dynamically with the contact coordinate, producing an adaptive geometric damping mechanism. Thus, the nonlinear contact potential generates nonlinear dissipation intrinsically through the contact geometry, without introducing damping phenomenologically.

\subsection{Linear Stability of the Autonomous Generalized Contact Duffing System}
For the generalized contact Duffing oscillator (\ref{E39})--(\ref{E41}), the equilibrium conditions require
\[
q^*=0,\qquad p^*=0,\qquad \Gamma(s^*)=0.
\]
For the quadratic contact potential (\ref{E27}),
\[
\Gamma(s)=\lambda s+\alpha s^2,
\]
the equilibrium contact coordinates are
\[
s^*=0,\qquad
s^*=-\frac{\lambda}{\alpha},\qquad \alpha\neq0.
\]
Thus, the nonlinear contact potential introduces an additional equilibrium that is absent in the classical linear contact formulation, enriching the contact phase structure even in the absence of external forcing.

Linearizing (\ref{E39})--(\ref{E41}) about an equilibrium point yields the Jacobian
\begin{equation}
J=
\begin{bmatrix}
0 & 1 & 0\\
-\omega_0^2-3\beta q^2 & -D(s) & -\Gamma''(s)p\\
-\omega_0^2q-\beta q^3 & p & -\Gamma'(s)
\end{bmatrix}.
\label{E43}
\end{equation}

At the classical contact equilibrium $(q^*,p^*,s^*)=(0,0,0)$,
\begin{equation}
J=
\begin{bmatrix}
0 & 1 & 0\\
-\omega_0^2 & -\lambda & 0\\
0 & 0 & -\lambda
\end{bmatrix},
\label{E44}
\end{equation}
whose eigenvalues are
\begin{equation}
\mu_1=-\lambda,\qquad
\mu_{2,3}=
\frac{-\lambda\pm\sqrt{\lambda^2-4\omega_0^2}}{2}.
\label{E45}
\end{equation}
For $\lambda>0$, all eigenvalues have negative real parts, implying that the origin is asymptotically stable.

At the additional contact equilibrium
\[
(q^*,p^*,s^*)=\left(0,0,-\frac{\lambda}{\alpha}\right),
\]
the Jacobian becomes
\begin{equation}
J=
\begin{bmatrix}
0 & 1 & 0\\
-\omega_0^2 & \lambda & 0\\
0 & 0 & \lambda
\end{bmatrix},
\label{E46}
\end{equation}
with eigenvalues
\begin{equation}
\mu_1=\lambda,\qquad
\mu_{2,3}=
\frac{\lambda\pm\sqrt{\lambda^2-4\omega_0^2}}{2}.
\label{E47}
\end{equation}
Since $\lambda>0$, at least one eigenvalue is always positive, and the equilibrium is therefore unstable.

The existence of this additional equilibrium is a direct consequence of the nonlinear contact geometry and occurs only for $\alpha\neq0$. Unlike the classical linear contact Hamiltonian, the generalized formulation modifies the contact phase structure by introducing a purely contact-induced equilibrium with fundamentally different stability properties.

\section{\label{s5}Numerical Investigations}
The dynamical consequences of the generalized contact Duffing oscillator are investigated by subjecting the mechanical subsystem to an external periodic excitation. Since the forcing acts only on the mechanical equation, the underlying contact Hamiltonian structure remains unchanged. The numerical study focuses on the quadratic contact potential (\ref{E27}), where the parameter $\alpha$ controls the strength of the nonlinear contact dissipation.

The influence of the contact geometry is examined by varying $\alpha$ while keeping all other system parameters fixed. Particular attention is given to the effective dissipation, phase-space dynamics, energy evolution, bifurcation structure, and Lyapunov stability.

For the quadratic contact potential, the governing equations are
\begin{align}
\dot q &= p,
\label{E48}\\
\dot p &= -\omega_0^2q-\beta q^3-(\lambda+2\alpha s)p+F\cos(\Omega t),
\label{E49}\\
\dot s &= \frac12p^2-\frac12\omega_0^2q^2-\frac14\beta q^4-\lambda s-\alpha s^2,
\label{E50}
\end{align}
with initial conditions
\[
q(0)=1,\qquad
p(0)=0,\qquad
s(0)=0.
\]

Unless otherwise stated, the system parameters are fixed as
\[
\omega_0=1,\quad
\beta=1,\quad
\lambda=0.2,\quad
F=0.80,\quad
\Omega=1.2,
\]
while the nonlinear contact parameter is varied over
\[
0\leq\alpha\leq 0.4.
\]

\subsection{Phase Space Dynamics}
Phase portraits of the harmonically forced generalized contact Duffing oscillator are presented in Figure~\ref{F3} for different values of the nonlinear parameter $\alpha$ to investigate the influence of the nonlinear contact potential on the asymptotic dynamics. For $\alpha=0$, corresponding to the classical contact Hamiltonian with constant geometric dissipation, the system exhibits a regular periodic attractor. As $\alpha$ increases, the phase-space structure undergoes significant changes, reflecting the state-dependent nature of the effective dissipation. The nonlinear contact geometry modifies the energy exchange between the external excitation and the intrinsic dissipation mechanism, leading to a gradual deformation and expansion of the attractor.

\begin{figure}[t]
\centering
\includegraphics[width=\linewidth]{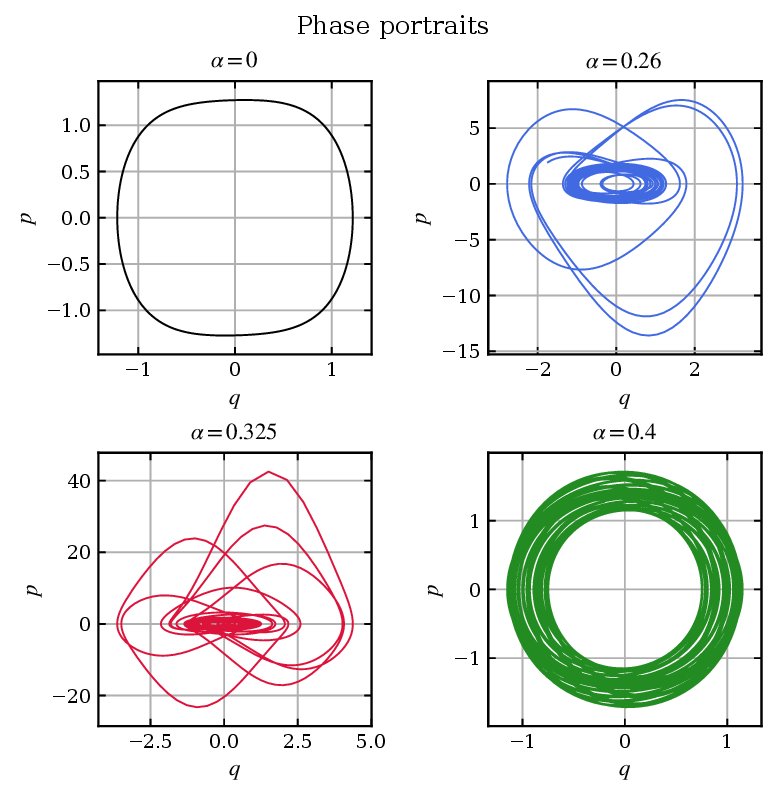}
\caption{Phase portraits of the harmonically forced generalized contact Duffing oscillator for different values of the nonlinear contact parameter $\alpha$.}
\label{F3}
\end{figure}
\subsection{Lyapunov Spectrum}

The complete Lyapunov spectrum of the generalized contact Duffing oscillator is computed to quantify its dynamical stability. Since the harmonic excitation renders the governing equations non-autonomous, the system is reformulated as an equivalent autonomous system in an extended four-dimensional phase space by introducing an additional phase variable $\phi$ satisfying $\dot{\phi}=\Omega$. The Lyapunov exponents are then computed using the standard Benettin algorithm with QR orthogonalization. The governing and variational equations are integrated simultaneously, while the tangent vectors are periodically orthonormalized to ensure numerical stability. The Lyapunov exponents are obtained from the long-time averaged logarithmic growth rates of the orthonormalized perturbation vectors. The nature of the dynamics is characterized by the largest Lyapunov exponent, which distinguishes periodic, quasiperiodic, and chaotic responses.

\begin{figure}[t]
\centering
\includegraphics[width=\linewidth]{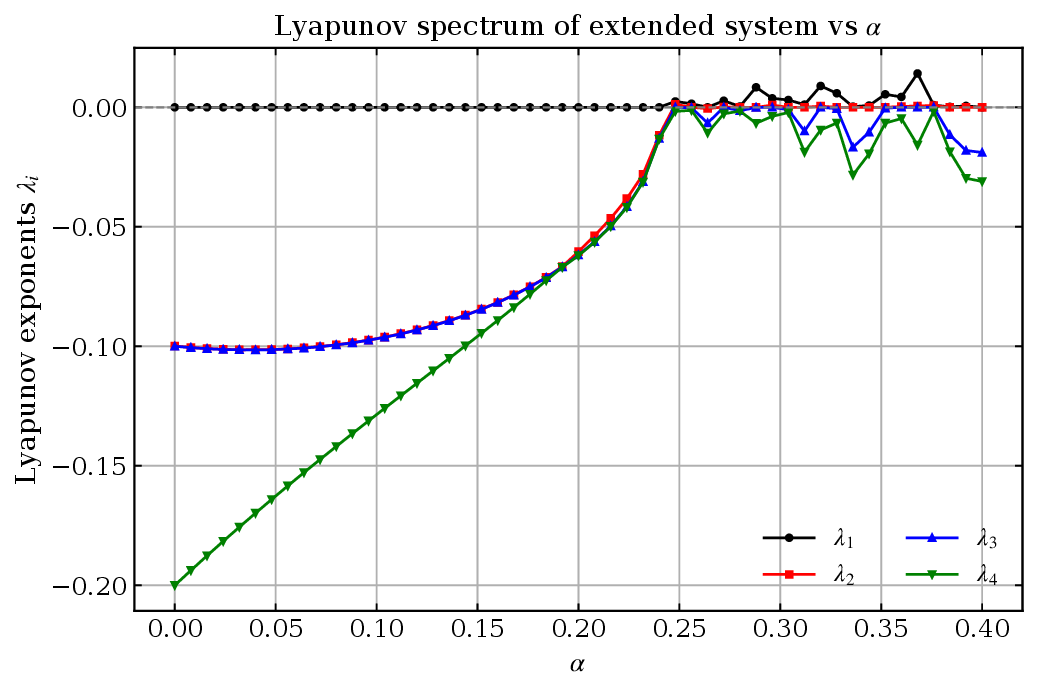}
\caption{Lyapunov spectrum of the extended four-dimensional generalized contact Duffing oscillator as a function of the nonlinear contact parameter $\alpha$. The periodically forced system is reformulated as an autonomous system in the extended phase space by introducing the forcing phase as an additional state variable. The emergence of a positive largest Lyapunov exponent indicates the transition from regular to chaotic dynamics, while the remaining exponents characterize contraction along the stable directions.}
\label{F4}
\end{figure}

Figure~\ref{F4} presents the complete Lyapunov spectrum of the generalized contact Duffing oscillator in the extended four-dimensional phase space as the nonlinear contact parameter $\alpha$ is varied. For $0\leq\alpha\leq0.245$, the largest Lyapunov exponent $\lambda_1$ (black curve) remains approximately zero, while the remaining exponents are negative, indicating bounded regular motion. Near $\alpha\approx0.245$, the spectrum undergoes a qualitative change, signaling a bifurcation and a transition in the stability of the attractor. Beyond this critical value, the largest Lyapunov exponent becomes positive, indicating sensitive dependence on initial conditions and the onset of chaotic dynamics.

As $\alpha$ is increased further, the system exhibits several quasiperiodic windows, during which the largest Lyapunov exponent returns to values close to zero. For the system parameters considered here, $\lambda_1$ again approaches zero at $\alpha=0.4$, accompanied by a second exponent that is also approximately zero, indicating quasiperiodic oscillations. These results demonstrate that the nonlinear contact parameter governs the transition from regular to chaotic dynamics through the adaptive geometric dissipation generated by the nonlinear contact potential.

\subsection{Poincar'e Sections}

Stroboscopic Poincar'e sections are constructed for the representative values of the nonlinear contact parameter $\alpha=0,;0.26,;0.325,$ and $0.4$ to characterize the asymptotic dynamics of the harmonically forced generalized contact Duffing oscillator. The governing equations are integrated over a sufficiently long time interval to eliminate transient dynamics. After discarding the transient response, the numerical solution is sampled once every forcing period,
$T=\frac{2\pi}{\Omega},$
thereby constructing the stroboscopic map
$
(q_n,p_n)=\bigl(q(nT),p(nT)\bigr).
$
The resulting Poincar'e sections provide a geometric representation of the asymptotic attractors corresponding to the selected values of $\alpha$, revealing periodic, quasiperiodic, and chaotic responses. The parameter values are chosen to coincide with those used in the phase-space analysis, allowing a direct comparison between the continuous trajectories and their corresponding stroboscopic maps. Together with the bifurcation diagram and the Lyapunov spectrum, the Poincar'e sections provide a comprehensive characterization of the global dynamics of the generalized contact Duffing oscillator.

\begin{figure}[t]
\centering
\includegraphics[width=\linewidth]{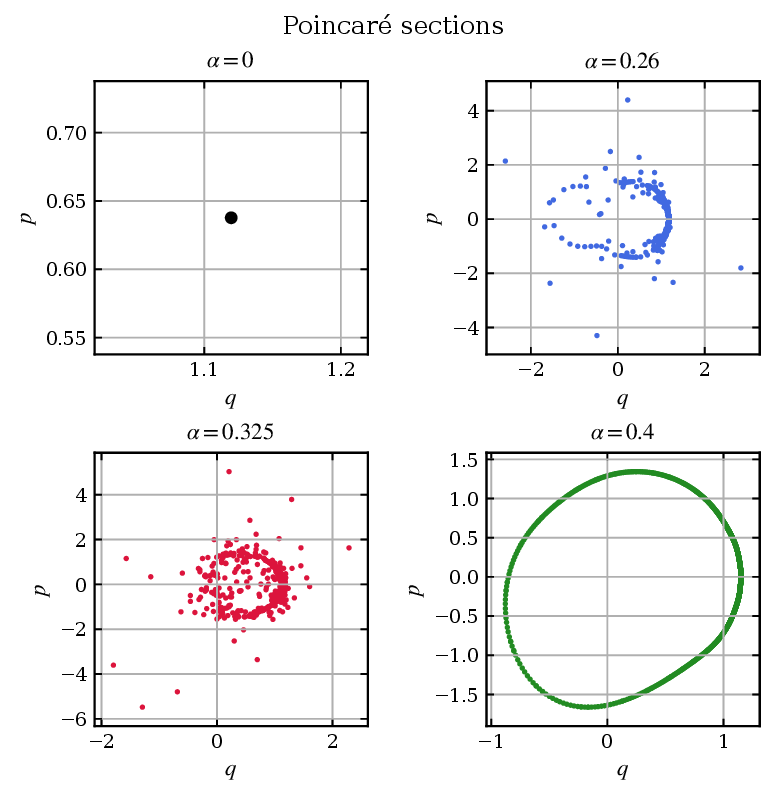}
\caption{Poincaré sections of the harmonically forced generalized contact Duffing oscillator for different values of the nonlinear contact parameter $\alpha$}
\label{F5}
\end{figure}
Figure~\ref{F5} presents the stroboscopic Poincar'e sections for the four representative values of the nonlinear contact parameter used in the phase-space analysis. For $\alpha=0$, the Poincar'e section consists of a single point, indicating a period-one limit cycle, which is consistent with the Lyapunov spectrum $(0,-,-,-)$. At $\alpha=0.26$, the section exhibits the onset of complex dynamics through the breakdown of the invariant torus, while the slightly positive largest Lyapunov exponent indicates the emergence of sensitive dependence on initial conditions. For $\alpha=0.325$, the Poincar'e section forms a scattered set of points characteristic of a chaotic attractor, in agreement with the positive largest Lyapunov exponent. Finally, at $\alpha=0.4$, the points lie on a closed invariant curve, indicating quasiperiodic motion, consistent with the Lyapunov spectrum $(0,0,-,-)$. These Poincar'e sections, together with the bifurcation diagram and Lyapunov spectrum, demonstrate how the nonlinear contact dissipation modifies the attractor geometry and governs the transition from periodic to chaotic dynamics in the generalized contact Duffing oscillator.
\section{\label{s6}Discussion}
Recent investigations have shown that contact geometry provides a natural framework for studying qualitative dynamical phenomena such as equilibria and bifurcations\cite{Montaldi2024}. The proposed nonlinear contact geometry provides a natural mechanism for generating adaptive dissipation without modifying the canonical contact Hamiltonian structure. Unlike the conventional contact formulation, the generalized framework allows the dissipation coefficient to evolve dynamically. Consequently, state-dependent damping emerges intrinsically from the contact geometry rather than being introduced phenomenologically. The generalized contact Duffing oscillator demonstrates how nonlinear contact potentials influence the global dynamics through the effective contact dissipation. Increasing the nonlinear contact parameter modifies the phase-space structure, alters the bifurcation behavior, and ultimately leads to chaotic dynamics, as confirmed by the Lyapunov spectrum. These results demonstrate that adaptive geometric dissipation provides an effective mechanism for controlling the long-term behavior of nonlinear oscillatory systems.

The proposed formulation extends the classical contact Hamiltonian framework while preserving its canonical geometric structure and enlarging the class of admissible dissipative Hamiltonian systems. The present study has been restricted to smooth contact potentials and their application to a single-degree-of-freedom Duffing oscillator. Extensions to multi-degree-of-freedom systems, non-polynomial contact potentials, and control-oriented formulations constitute promising directions for future research.
\section{\label{s7} Conclusion}
By replacing the conventional linear contact potential with an arbitrary smooth nonlinear contact potential, a generalized contact Hamiltonian framework has been developed. The proposed formulation preserves the canonical contact structure while introducing adaptive state-dependent dissipation as an intrinsic consequence of the underlying contact geometry. Within this framework, a generalized energy dissipation law and a structural characterization of contact-induced damping have been established. The proposed theory has been applied to the Duffing oscillator to investigate the dynamical implications of nonlinear contact geometry. Numerical simulations demonstrate that the generalized formulation significantly modifies the qualitative dynamical behaviour of the system through adaptive geometric dissipation. The present framework provides a systematic foundation for investigating nonlinear dissipative phenomena in a broad class of mechanical and engineering systems and opens new avenues for future research in geometric mechanics, nonlinear dynamics, and structure-preserving numerical methods.

\bibliography{reference}
\end{document}